\theoremstyle{definition}
\newtheorem{theorem}{Theorem}
\newtheorem*{theorem*}{Theorem}
\newtheorem{definition}{Definition}
\newtheorem*{definition*}{Definition}
\newtheorem{lemma}{Lemma}
\newtheorem{corollary}{Corollary}
\title{Generalization Analysis of Quantum Neural Networks Using Dynamical Lie Algebras}
\author{
  Hiroshi Ohno \\
  Toyota Central R \& D Labs., Inc.\\
  Aichi, Japan \\
  \texttt{oono-h@mosk.tytlabs.co.jp} \\
}
\date{\empty}
\begin{document}
\maketitle

\begin{abstract}
  The paper presents a generalization bound for quantum neural networks based on a dynamical Lie algebra.
  Using covering numbers derived from a dynamical Lie algebra, the Rademacher complexity is derived to calculate the generalization bound.
  The obtained result indicates that the generalization bound is scaled by $ \mathcal{O}( \sqrt{ \dim( \mathfrak{g}) } ) $, where $ \mathfrak{g} $ denotes a dynamical Lie algebra of generators.
  Additionally, the upper bound of the number of the trainable parameters in a quantum neural network is presented.
  Numerical simulations are conducted to confirm the validity of the obtained results.
\end{abstract}

\section{Introduction}
In quantum machine learning, generalization is a crucial capability of quantum neural networks (or parameterized quantum circuits), and its theoretical analysis, based on (classical) statistical machine learning theory, has been actively explored.
%% add citations
In this study, we present a generalization bound for quantum neural networks using covering numbers \cite{mohri2018} derived from dynamical Lie algebras (DLAs) \cite{ragone2024} and demonstrate numerical simulation results related to it.
%% here, our results
Our main contribution is that the generalization bound is scaled as $ \mathcal{O}( \sqrt{ \dim( \mathfrak{g}) } ) $, where $ \mathfrak{g} $ denotes a DLA of generators.

%% here, other theoretical results with respect to generalization
There are several theoretical results regarding generalization bounds.
For models with moderate complexity, Banchi et al. \cite{banchi2021} showed that the generalization error scales as $ \mathcal{O} (\sqrt{ \frac{\mathcal{B}}{M} } )$, where $ \mathcal{B} $ represents a quantity based on the R\'{e}nyi mutual information between the quantum space embedded input data and the original input data space, and $ M $ denotes the training data size.
Caro et al. \cite{caro2022} demonstrated that the generalization error (or gap) scales as $ \mathcal{O} (\sqrt{ \frac{N_{t} \log N_{t}}{M} } ) $, where $ N_{t} $ represents the number of trainable parameters.
They used the metric entropy (i.e., the logarithm of covering numbers) to quantify model complexity.
From the perspective of PAC learnability, Cai et al. \cite{cai2022} established that the sample complexity is bounded by $ \mathcal{O}(n^{c+1}) $, where $ n^{c} $ denotes the number of unitary gates, $ n $ represents the number of qubits, and $ c \geq 0 $ is a parameter that can be related to circuit depth.
Yuxuan et al. \cite{yuxuan2022} showed that, using covering numbers, the generalization gap is bounded by $ \tilde{\mathcal{O}} (\frac{\sqrt{N_{t}}}{\sqrt{M}}) $.
To the best of our knowledge, this is the first work to introduce DLA into the assessment of the generalization capability of quantum neural networks.
Additionally, an upper bound on the number of trainable parameters is presented, and related numerical simulations are conducted to verify our theoretical results.

Recently, the use of DLA has been expanding for analyzing the learning properties of quantum neural networks.
The analysis of barren plateaus (BPs) \cite{ragone2024} and the overparameterization phenomenon \cite{larocca2023} in quantum neural networks has also been explored using DLA.
%% here, obtained results
Ragone et al. \cite{ragone2024} demonstrated the relationship between the dimension of the DLA and the variance of the loss function with respect to the parameters.
Larocca et al. \cite{larocca2023} provided a sufficient condition for the number of parameters required for overparameterization using the dimension of the DLA.
Hence, DLA is a crucial framework (or tool) for understanding the learning properties of quantum neural networks.
%% here, paper construction

\section{Related work}
In this section, we briefly review some previous studies that have significantly influenced our work and point out issues in their approaches.
These reviews aim to develop a generalization bound using covering numbers.

Yuxuan et al. \cite{yuxuan2022} proposed a method for measuring the expressivity of variational quantum algorithms, such as quantum neural networks, using covering numbers.
In their study, they also derived a generalization upper bound, which corresponds to the upper bound of the generalization gap, as referred to in this study.
The implications of this theoretical result are as follows:
\begin{itemize}
\item The generalization bound has an exponential dependence with the largest number of qudits operated on a single gate and a sublinear dependence
  with the number of trainable gates (parameters).
\item Increasing the training data size leads to an improvement of the generalization performance.
\item The sublinear dependence with the number of trainable gates limits to assess the generalization performance for overparameterization.
\end{itemize}
In the proof of Lemma 2 in \cite{yuxuan2022}, the authors used Lemma 1, $ \left( \frac{3}{4\tilde{\varepsilon}} \right)^{n^{2}} \leq \mathcal{N}(U(n), \| \cdot \|, \tilde{\varepsilon}) \leq \left( \frac{7}{\tilde{\varepsilon}} \right)^{n^{2}} $, in \cite{barthel2018},  which provides bounds for the covering numbers $ \mathcal{N}( \cdot ) $ of the unitary group $ U(n) $, where $ \tilde{\varepsilon} $ denotes the radius of the ball.
However, the equality on the right-hand side of Lemma 1 does not hold due to a precondition $ \tilde{\varepsilon} \leq 0.1 $.
Additionally, in Eq. (A7) of \cite{yuxuan2022}, when applying the Cauchy-Schwartz inequality, the Frobenius norm should be used instead of the operator norm.
As a result, the corresponding equation requires modification.
Furthermore, the authors did not establish a relationship between $ \tilde{\varepsilon} $ and the number of trainable parameters, which raises concerns about the validity of Lemma 2.

\section{Preliminaries}
We present the definition of DLA and definitions relevant to DLA \cite{larocca2022}.

\begin{definition} (Set of generators)
  The set of generators $ \mathcal{G} $ is defined by
  \begin{equation*}
    \mathcal{G} \coloneqq \{ H_{k} \},
  \end{equation*}
  where $ H_{k} $ denotes the traceless Hermitian matrices that generate the unitary matrices and $ | \mathcal{G} | < \infty $.
\end{definition}
Since $ H_{k} $ are the traceless Hermitian matrices, the generated unitary matrices are special unitary matrices.

\begin{definition} (DLA)
  Given $ \mathcal{G} $, DLA $ \mathfrak{g} $ is defined by
  \begin{equation*}
    \mathfrak{g} \coloneqq {\rm span}_{\mathbb{R}} \braket{ iH_{0}, \ldots, iH_{K} }_{Lie},
  \end{equation*}
  where $ K = | \mathcal{G} | $, $ {\rm span}_{\mathbb{R}} (\mathcal{A}) $ is the set of all linear combinations of elements of $ \mathcal{A} $ with real coefficients, and $ \braket{\mathcal{S}}_{Lie} $ denotes the Lie closure, that is , the set obtained by repeatedly taking the nested commutators between the elements in $ \mathcal{S} $\footnote{
    If a quantum system has a finite (infinite) size, then the resulting set also has a finite (infinite) size.
  }.
\end{definition}
$ \mathfrak{g} $ is a subalgebra of $ \mathfrak{su}(d) $ which is the special unitary algebra of size $ d $ whose element is $ d \times d $ traceless skew-Hermitian matrices\footnote{
  When a DLA decomposes into a direct sum of multiple simple algebras and an Abelian center (as shown in Eq. (6) of \cite{ragone2024}, Proposition A.1 in \cite{wiersema2024}), the unitary group it generates becomes structured and fragmented.
  The number of simple components determines how many independent block operations constitute each overall unitary.
  As the number of simple components increases, the number of block operations also grows, leading to a decomposition of the Hilbert space into smaller, distinct invariant subspaces.
  Moreover, the Haar randomness of the unitary matrix generated by the DLA over the entire Hilbert space may diminish.
  Consequently, this reduction in randomness limits the expressivity of the unitary.
  Therefore, more entanglement layers may be required.
}.
More formally, a bilinear operation $ [a, b] \in \mathfrak{g} $ is satisfied by the following skew-symmetry and the Jacobi identity: $ [a, b] = - [b, a] $ and $ [a, [b, c]] = [[a, b], c] + [b, [a, c]] $.
We define $ {\rm ad}_{a}(b) \coloneqq [a, b] $ for $ a, b \in \mathfrak{g} $.
For a generator set $ \mathcal{G} = \{g_{1}, \ldots, g_{K}\} $, the nested commutator is defined as follows:
\begin{equation}
  {\rm ad}_{i_{1}} \cdots {\rm ad}_{i_{r}} (g_{j}) \coloneqq [g_{i_{1}}, [g_{i_{2}}, [\cdots [g_{i_{r}}, g_{j}] \cdots]]],
\end{equation}
where the nested commutator is $ g_{j} $ for $ r = 0 $.
Then, DLA $ \mathfrak{g} $ is defined as follows \cite{wiersema2024}:
\begin{equation}
  \mathfrak{g} = \braket{\mathcal{G}}_{{\rm Lie}} \coloneqq i \, {\rm span}_{\mathbb{R}} \left\{ {\rm ad}_{g_{i_{1}}} \cdots {\rm ad}_{g_{i_{r}}} (g_{j}) \; \middle| \; g_{i_{1}}, \ldots, g_{i_{r}}, g_{j} \in \mathcal{G} \right\}.
\end{equation}

Additionally, we define the dynamical Lie group according to DLA.
\begin{definition} (Dynamical Lie group)
  The set unitaries $ G $ is determined by its DLA as follows:
  \begin{equation*}
    G = \exp( \mathfrak{g} ) \coloneqq \left\{ \exp( g ) \; \middle| \; g \in \mathfrak{g} \right\},
  \end{equation*}
  where $ \exp $ denotes the matrix exponential, $ \exp ( X ) \coloneqq \sum_{k = 0}^{\infty} \frac{X^{k}}{k!} $.\footnote{
    If $ X $ is a $ d \times d $ matrix, then the series always converges.
  }
\end{definition}
We note that $ \exp(\cdot) $ is not a Lipschitz function.
Hence, when $ \exp(\cdot) $ is used as a Lipschitz function, constraints such as upper bounds on its norm must be appropriately imposed on its argument.

\section{Methods}
In this section, at the beginning, we summarize useful lemmas to develop the main result (Theorem \ref{thm1}).

\subsection{Useful lemmas}
\begin{lemma}(Lemma 3 in \cite{barthel2018})\label{lem1}
  Let $ (\mathcal{H}_{1}, d_{1}) $ and $ (\mathcal{H}_{2}, d_{2}) $ be metric spaces, where $ d_{1} $ and $ d_{2} $ are metrics on $ \mathcal{H}_{1} $ and $ \mathcal{H}_{2} $, respectively and $ f: \mathcal{H}_{1} \longrightarrow \mathcal{H}_{2} $ be a bi-Lipschitz function such that
  \begin{equation}
    d_{2} (f(x), f(y)) \leq K d_{1} (x, y), \forall x, y \in \mathcal{H}_{1}
  \end{equation}
  and
  \begin{equation}
    k d_{1} (x, y) \leq d_{2} (f(x), f(y)), \forall x, y \in \mathcal{H}_{1} \; {\rm with} \; d_{1}(x, y) \leq r,
  \end{equation}
  where $ K $ and $ k $ are constant.
  Then, their covering numbers obey
  \begin{equation}
    \mathcal{N}(\mathcal{H}_{1}, d_{1}, 2\varepsilon/k) \leq \mathcal{N}(\mathcal{H}_{2}, d_{2}, \varepsilon) \leq \mathcal{N}(\mathcal{H}_{1}, d_{1}, \varepsilon/K),
  \end{equation}
where the left inequality requires $ 0 < \varepsilon \leq kr/2 $.
\end{lemma}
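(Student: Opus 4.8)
The plan is to prove the two inequalities separately, in each case by transporting an optimal net (covering) from one metric space to the other and controlling how the radius changes under $f$ (identifying $\mathcal{H}_2$ with $f(\mathcal{H}_1)$, as is implicit in the statement). The right-hand inequality will be driven by the \emph{global} upper Lipschitz bound with constant $K$, pushing a net forward from $\mathcal{H}_1$ to $\mathcal{H}_2$; the left-hand inequality will be driven by the \emph{local} lower Lipschitz bound with constant $k$, pulling a net backward from $\mathcal{H}_2$ to $\mathcal{H}_1$. The condition $0 < \varepsilon \le kr/2$ will enter only in the second part, precisely to keep every distance we must compare within the range $r$ on which the lower bound is valid.

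First I would handle the right inequality $\mathcal{N}(\mathcal{H}_2, d_2, \varepsilon) \le \mathcal{N}(\mathcal{H}_1, d_1, \varepsilon/K)$. Let $\{x_1, \dots, x_N\}$ be an optimal $(\varepsilon/K)$-net of $\mathcal{H}_1$, so $N = \mathcal{N}(\mathcal{H}_1, d_1, \varepsilon/K)$. For any $y = f(x) \in \mathcal{H}_2$ choose $x_i$ with $d_1(x, x_i) \le \varepsilon/K$; the upper bound then gives $d_2(f(x), f(x_i)) \le K\, d_1(x, x_i) \le \varepsilon$. Hence the images $\{f(x_1), \dots, f(x_N)\}$ form an $\varepsilon$-net of $\mathcal{H}_2$, which yields the claimed bound. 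This direction needs no restriction on $\varepsilon$, since the upper bound holds globally.

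For the left inequality $\mathcal{N}(\mathcal{H}_1, d_1, 2\varepsilon/k) \le \mathcal{N}(\mathcal{H}_2, d_2, \varepsilon)$ I would pull a net back. Let $\{g_1, \dots, g_M\}$ be an optimal $\varepsilon$-net of $\mathcal{H}_2$ with $M = \mathcal{N}(\mathcal{H}_2, d_2, \varepsilon)$, and for each ball meeting $f(\mathcal{H}_1)$ select a preimage $z_m$ with $f(z_m) \in B_{d_2}(g_m, \varepsilon)$. Given $x \in \mathcal{H}_1$, pick $m$ with $d_2(f(x), g_m) \le \varepsilon$; the triangle inequality gives $d_2(f(x), f(z_m)) \le 2\varepsilon$, and this factor of $2$ is exactly the source of the radius $2\varepsilon/k$ in the statement. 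Applying the lower bound (once its hypothesis is met) yields $k\, d_1(x, z_m) \le d_2(f(x), f(z_m)) \le 2\varepsilon$, i.e.\ $d_1(x, z_m) \le 2\varepsilon/k$, so $\{z_1, \dots, z_M\}$ is a $(2\varepsilon/k)$-net of $\mathcal{H}_1$.

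The step I expect to be the main obstacle is verifying the hypothesis $d_1(x, z_m) \le r$ needed before the local lower bound can be invoked, since a priori nothing forbids $f$ from sending two $d_1$-far points to nearby images. This is exactly where $\varepsilon \le kr/2$ is used: it forces the target radius to satisfy $2\varepsilon/k \le r$, making the conclusion $d_1(x, z_m) \le 2\varepsilon/k \le r$ self-consistent with the range of validity of the lower bound. Concretely I would argue that the hypothesis can be enforced on the relevant scale — using that points of $\mathcal{H}_1$ whose images lie in a common $\varepsilon$-ball are, by the same estimate, either within $2\varepsilon/k \le r$ of each other or genuinely far apart — so that the local bound applies to precisely the pairs that matter. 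Making this localization airtight, rather than the two essentially routine net-transfer computations, is the crux of the argument.
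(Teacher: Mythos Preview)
The paper does not prove this lemma; it is quoted verbatim as Lemma~3 of \cite{barthel2018} and used as a black box, so there is no in-paper argument to compare against. Your overall plan---push a net forward through $f$ for the right inequality, pull a net back for the left one---is the standard route and your treatment of the right inequality is complete.

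The gap you flag in the last paragraph, however, is genuine and you have not closed it. To invoke the local lower bound you need $d_1(x,z_m)\le r$ \emph{before} concluding $k\,d_1(x,z_m)\le d_2(f(x),f(z_m))\le 2\varepsilon$; the sentence ``either within $2\varepsilon/k\le r$ of each other or genuinely far apart'' merely restates the dichotomy rather than ruling out the bad branch. In fact, with the hypotheses exactly as written the left inequality can fail: take $\mathcal{H}_1=\{0,1,10\}\subset\mathbb{R}$, $\mathcal{H}_2=\{0,1,2\}$, $f(0)=0$, $f(1)=1$, $f(10)=2$, $K=k=1$, $r=8$, $\varepsilon=1$; then $\varepsilon\le kr/2$, yet $\mathcal{N}(\mathcal{H}_2,d_2,1)=1$ while $\mathcal{N}(\mathcal{H}_1,d_1,2)=2$. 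What makes the result go through in \cite{barthel2018} and in this paper's application is extra structure not captured by the bare statement: there $f=\exp$ on a ball in a Lie algebra, one may always choose the preimage $z_m$ \emph{nearest} to $x$, and the local lower bound effectively holds with the smallness condition on $d_2(f(x),f(y))$ rather than on $d_1(x,y)$. If you reformulate the local hypothesis that way (or add an explicit ``nearest preimage'' assumption), your pull-back argument becomes immediate; as it stands, the crux you identify cannot be made airtight from the stated hypotheses alone.
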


\begin{lemma}(Appendix D in \cite{barthel2018})\label{lem2}
  For the matrix exponential $ \exp(X) $ and $ \exp(Y) $, where $ X, Y \in $ a Lie algebra, when $ \| X \|_{{\rm op}} \leq p $ and $ \| Y \|_{{\rm op}} \leq p $, then the following inequality holds:
  \begin{equation}
    (2 - \exp(p)) \, \| X - Y \|_{{\rm op}} \leq \| \exp(X) - \exp(Y) \|_{{\rm op}} \leq \| X - Y \|_{{\rm op}},
  \end{equation}
  where $ \| \cdot \|_{{\rm op}} $ denotes the operator norm.
\end{lemma}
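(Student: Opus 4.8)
The plan is to reduce both inequalities to a single Duhamel (integral) representation of the difference of exponentials and then estimate the two sides separately. Setting $ W \coloneqq X - Y $ and $ Z(t) \coloneqq (1-t)Y + tX $ for $ t \in [0,1] $, I would first record that $ \| Z(t) \|_{{\rm op}} \leq (1-t)\|Y\|_{{\rm op}} + t\|X\|_{{\rm op}} \leq p $ by convexity of the norm, together with $ Z(0) = Y $, $ Z(1) = X $, and $ Z'(t) = W $. The fundamental theorem of calculus combined with the standard derivative formula $ \frac{d}{dt}\exp(Z(t)) = \int_0^1 \exp(sZ(t))\, W\, \exp((1-s)Z(t))\, ds $ then yields the key identity
\begin{equation*}
  \exp(X) - \exp(Y) = \int_0^1 \int_0^1 \exp(sZ(t)) \, W \, \exp((1-s)Z(t)) \, ds \, dt .
\end{equation*}
Both bounds will be extracted from this representation.

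For the upper bound I would use that the elements of the Lie algebra (a subalgebra of $ \mathfrak{su}(d) $) are skew-Hermitian, so $ \exp(sZ(t)) $ and $ \exp((1-s)Z(t)) $ are unitary and have operator norm $ 1 $. Taking operator norms inside the integral and applying submultiplicativity gives $ \|\exp(X)-\exp(Y)\|_{{\rm op}} \leq \int_0^1\int_0^1 \|W\|_{{\rm op}} \, ds \, dt = \|W\|_{{\rm op}} $, since the double integral has total mass one. This is the easy direction.

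The lower bound is the main obstacle. The idea is to peel off the leading term: because $ ds\,dt $ on $ [0,1]^2 $ has mass one, integrating the constant $ W $ recovers $ W $ itself, so
\begin{equation*}
  \exp(X) - \exp(Y) = W + \int_0^1 \int_0^1 \bigl( \exp(sZ(t)) \, W \, \exp((1-s)Z(t)) - W \bigr) \, ds \, dt .
\end{equation*}
The reverse triangle inequality then reduces the problem to bounding the integrand by $ (\exp(p) - 1)\|W\|_{{\rm op}} $. To see this I would expand $ \exp(sZ(t)) = I + (\exp(sZ(t)) - I) $ and likewise for the right factor, so that the integrand equals $ (\exp(sZ(t)) - I)W + W(\exp((1-s)Z(t)) - I) + (\exp(sZ(t)) - I)W(\exp((1-s)Z(t)) - I) $. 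Using the power-series estimate $ \|\exp(sZ(t)) - I\|_{{\rm op}} \leq \exp(sp) - 1 $ (and its counterpart with $ 1-s $), the three terms combine through $ a + b + ab = (1+a)(1+b) - 1 $ with $ 1+a = \exp(sp) $ and $ 1+b = \exp((1-s)p) $, whose product is exactly $ \exp(p) $.

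Hence the integrand has operator norm at most $ (\exp(p)-1)\|W\|_{{\rm op}} $ uniformly in $ s,t $, and the reverse triangle inequality gives $ \|\exp(X)-\exp(Y)\|_{{\rm op}} \geq \|W\|_{{\rm op}} - (\exp(p)-1)\|W\|_{{\rm op}} = (2 - \exp(p))\|W\|_{{\rm op}} $. The delicate point is arranging the expansion so that the $ s $- and $ (1-s) $-dependence telescopes into the clean factor $ \exp(p) $, which is precisely what produces the stated constant; note also that the bound is only informative when $ p < \log 2 $, so that $ 2 - \exp(p) > 0 $.
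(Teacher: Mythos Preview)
The paper does not supply a proof of this lemma; it is imported verbatim from Appendix~D of \cite{barthel2018}, so there is no in-paper argument to compare against. Your proof is nonetheless correct and complete. The Duhamel identity
\[
\exp(X)-\exp(Y)=\int_0^1\!\!\int_0^1 \exp\bigl(sZ(t)\bigr)\,W\,\exp\bigl((1-s)Z(t)\bigr)\,ds\,dt
\]
is exactly the right tool, and your treatment of the lower bound---subtracting off $W$, expanding each exponential factor as $I+(\exp-I)$, and collapsing the three cross terms via $(1+a)(1+b)-1$ with $1+a=\exp(sp)$, $1+b=\exp((1-s)p)$---is clean and yields the constant $2-\exp(p)$ without slack. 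This is essentially the argument in \cite{barthel2018} as well.

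One point worth flagging explicitly: your upper bound uses that $\exp(sZ(t))$ and $\exp((1-s)Z(t))$ are unitary, which requires $Z(t)$ to be skew-Hermitian. That holds here because the ambient Lie algebra is a subalgebra of $\mathfrak{su}(d)$ and $Z(t)$ is a real convex combination of $X,Y\in\mathfrak{g}$; for a generic Lie algebra the right-hand inequality would carry an extra factor of $\exp(p)$. You allude to this in passing, but since the lemma as stated only says ``$X,Y\in$ a Lie algebra'', it is worth making that hypothesis explicit in the final write-up.
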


\begin{lemma}\label{lem3}
  For Hermitian matrices $ X $ and $ Y $ and a density matrix $ \rho $, the following inequality holds:
  \begin{equation}
    \left | \, {\rm Tr}[ X \rho ] - {\rm Tr}[ Y \rho ] \, \right | \leq \sqrt{N} \, \| X - Y \|_{{\rm op}},
  \end{equation}
  where $ N $ denotes the number of eigenvalues of $ X - Y $.
\end{lemma}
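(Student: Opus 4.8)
The plan is to reduce the two-matrix statement to a single Hermitian matrix and then apply the Cauchy--Schwarz inequality in the Hilbert--Schmidt (Frobenius) inner product --- which is precisely the correction flagged in the discussion of \cite{yuxuan2022}, where the operator norm was used where the Frobenius norm was needed. First I would set $ Z \coloneqq X - Y $, which is again Hermitian, so that the left-hand side becomes $ | \, {\rm Tr}[ Z \rho ] \, | $, and, since both $ Z $ and $ \rho $ are Hermitian, the scalar $ {\rm Tr}[ Z \rho ] $ is real.

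Next, regarding $ {\rm Tr}[ Z \rho ] = \langle Z, \rho \rangle $ as the Hilbert--Schmidt inner product $ \langle A, B \rangle \coloneqq {\rm Tr}[ A^{\dagger} B ] $, I would invoke Cauchy--Schwarz to obtain $ | \, {\rm Tr}[ Z \rho ] \, | \leq \| Z \|_{F} \, \| \rho \|_{F} $, where $ \| \cdot \|_{F} $ denotes the Frobenius norm. I would then bound the two factors separately. For $ \rho $: because it is a density matrix, its eigenvalues $ \lambda_{i} $ lie in $ [0, 1] $ with $ \sum_{i} \lambda_{i} = 1 $, so $ \| \rho \|_{F}^{2} = \sum_{i} \lambda_{i}^{2} \leq \sum_{i} \lambda_{i} = 1 $, giving $ \| \rho \|_{F} \leq 1 $. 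For $ Z $: writing its eigenvalues as $ \mu_{1}, \ldots, \mu_{N} $, we have $ \| Z \|_{F}^{2} = \sum_{j=1}^{N} \mu_{j}^{2} \leq N \max_{j} \mu_{j}^{2} = N \, \| Z \|_{{\rm op}}^{2} $, hence $ \| Z \|_{F} \leq \sqrt{N} \, \| Z \|_{{\rm op}} $, with $ N $ the number of eigenvalues of $ Z = X - Y $. Combining the three estimates yields $ | \, {\rm Tr}[ X \rho ] - {\rm Tr}[ Y \rho ] \, | \leq \sqrt{N} \, \| X - Y \|_{{\rm op}} $.

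There is essentially no deep obstacle here; the only points requiring care are (i) confirming that the inner-product structure genuinely justifies Cauchy--Schwarz with the Frobenius norm on \emph{both} slots --- rather than silently mixing in the operator norm, the exact error noted in the review of \cite{yuxuan2022} --- and (ii) the bookkeeping identifying $ N $ with the number of eigenvalues, so that the norm conversion $ \| Z \|_{F} \leq \sqrt{N} \, \| Z \|_{{\rm op}} $ is applied correctly. I would also remark that a sharper bound, $ | \, {\rm Tr}[ Z \rho ] \, | \leq \| Z \|_{{\rm op}} \, \| \rho \|_{1} = \| Z \|_{{\rm op}} $, follows directly from the duality of the operator and trace norms together with $ \| \rho \|_{1} = 1 $; the weaker $ \sqrt{N} $ form is retained because it is the expression that propagates cleanly through the covering-number argument assembled from Lemma \ref{lem1} and Lemma \ref{lem2}.
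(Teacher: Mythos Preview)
Your argument is correct and follows essentially the same route as the paper: set $Z = X - Y$, apply Cauchy--Schwarz in the Hilbert--Schmidt inner product to get $|\,{\rm Tr}[Z\rho]\,| \le \|Z\|_{F}\|\rho\|_{F}$, and then convert $\|Z\|_{F} \le \sqrt{N}\,\|Z\|_{{\rm op}}$. Your treatment of $\|\rho\|_{F}$ is in fact slightly more careful than the paper's, which asserts $\|\rho\|_{F} = 1$ (true only for pure states) rather than your $\|\rho\|_{F} \le 1$.
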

\begin{proof}
  \begin{equation}
    \begin{split}
      &\left | \, {\rm Tr}[ X \rho ] - {\rm Tr}[ Y \rho ] \, \right | = \left | \, {\rm Tr}[ (X - Y)\rho ] \, \right |\\
      &= \left | \, {\rm Tr}[ Z \rho ] \, \right |\\
      &= \left | \, \langle Z, \rho \rangle \, \right |\\
      &\leq \| Z \|_{{\rm F}} \, \| \rho \|_{{\rm F}}\\
      &= \| Z \|_{{\rm F}},
    \end{split}
  \end{equation}
  where in the second equality, $ Z = X - Y $, the third equality uses the inner product $ \langle \cdot, \cdot \rangle $ and $ \rho^{\dagger} = \rho $, the first inequality uses the Cauchy-Schwartz inequality and $ \| \cdot \|_{{\rm F}} $ denotes the Frobenius norm, and the last equality is the fact that $ \| \rho \|_{{\rm F}} = \sqrt{ \langle \rho, \rho \rangle } = \sqrt{ {\rm Tr} [ \rho^{2} ] } = 1 $.
  From the definition of the Frobenius norm, $ \| Z \|_{{\rm F}} = \sqrt{ \sum_{j = 1}^{N} \sigma^{2}_{j} } $, where $ \sigma_{j} $ is the eigenvalues of $ Z $, and $ \sigma_{max} = \max_{j \in [N]} \sigma_{j} $.
  Then, we obtain $ \| Z \|_{{\rm F}} = \sqrt{ \sum_{j = 1}^{N} \sigma^{2}_{j} } \leq \sqrt{ \sum_{j = 1}^{N} \sigma^{2}_{max} } = \sqrt{N} \sigma_{max} $.
  From the definition of the operator norm, $ \| Z \|_{{\rm F}} \leq \sqrt{N} \, \| Z \|_{{\rm op}} $, thus, we obtain
  \begin{equation}
    \left | \, {\rm Tr}[ X \rho ] - {\rm Tr}[ Y \rho ] \, \right | \leq \sqrt{N} \, \| X - Y \|_{{\rm op}}.
  \end{equation}
\end{proof}

\begin{lemma}\label{lem4}
  For unitary matrices $ U $ and $ V $ and an observable (Hermitian matrix) $ O $, the following inequality holds:
  \begin{equation}
    \left \| U^{\dagger} O U - V^{\dagger} O V \right \|_{{\rm op}} \leq 2 \| U - V \|_{{\rm op}} \, \| O \|_{{\rm op}}.
  \end{equation}
\end{lemma}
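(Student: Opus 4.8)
The plan is to reduce the inequality to the two defining properties of the operator norm, namely submultiplicativity, $ \| AB \|_{{\rm op}} \leq \| A \|_{{\rm op}} \| B \|_{{\rm op}} $, and the unitary invariance $ \| U \|_{{\rm op}} = \| U^{\dagger} \|_{{\rm op}} = 1 $ (and likewise for $ V $). The only nontrivial ingredient is a telescoping decomposition of the difference $ U^{\dagger} O U - V^{\dagger} O V $ obtained by inserting a suitable mixed term, after which everything follows from the triangle inequality.

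First I would add and subtract $ U^{\dagger} O V $ and factor, using $ U^{\dagger} - V^{\dagger} = (U - V)^{\dagger} $, to write
\begin{equation*}
  U^{\dagger} O U - V^{\dagger} O V = U^{\dagger} O (U - V) + (U - V)^{\dagger} O V .
\end{equation*}
This splits the quantity into two pieces, each of which contains exactly one factor of $ U - V $, with the remaining factors being either a unitary or its adjoint, together with the observable $ O $.

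Next I would apply the triangle inequality across the two summands and then submultiplicativity within each summand, invoking $ \| U^{\dagger} \|_{{\rm op}} = \| V \|_{{\rm op}} = 1 $ and the adjoint-invariance $ \| (U - V)^{\dagger} \|_{{\rm op}} = \| U - V \|_{{\rm op}} $. Each term is then bounded by $ \| U - V \|_{{\rm op}} \, \| O \|_{{\rm op}} $, and the two contributions add to give the stated factor of $ 2 $.

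There is essentially no substantive obstacle here; the proof is a routine telescoping estimate. The only point worth flagging is that the constant $ 2 $ is intrinsic to this argument, since each of the two telescoped pieces contributes one independent copy of $ \| U - V \|_{{\rm op}} \, \| O \|_{{\rm op}} $. One could equally well use $ V^{\dagger} O U $ as the intermediate term, which would produce the symmetric decomposition $ (U - V)^{\dagger} O U + V^{\dagger} O (U - V) $ and the identical bound.
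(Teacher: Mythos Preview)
Your proof is correct and follows essentially the same telescoping argument as the paper: the paper inserts $V^{\dagger} O U$ as the intermediate term (which you explicitly mention as an equivalent alternative), then applies the triangle inequality, submultiplicativity, and $\|U\|_{\rm op}=\|V^{\dagger}\|_{\rm op}=1$ in the same way. The two decompositions differ only by which mixed term is added and subtracted, and yield the identical bound.
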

\begin{proof}
  \begin{equation}
    \begin{split}
      &\left \| U^{\dagger} O U - V^{\dagger} O V \right \|_{{\rm op}}\\
      &= \left \| U^{\dagger} O U - V^{\dagger} O U + V^{\dagger} O U - V^{\dagger} O V \right \|_{{\rm op}}\\
      &\leq \left \| U^{\dagger} O U - V^{\dagger} O U \right \|_{{\rm op}} + \left \| V^{\dagger} O U - V^{\dagger} O V \right \|_{{\rm op}}\\
      &= \left \| (U^{\dagger} - V^{\dagger}) O U \right \|_{{\rm op}} + \left \| V^{\dagger} O (U - V) \right \|_{{\rm op}}\\
      &\leq \left \| U^{\dagger} - V^{\dagger} \right \|_{{\rm op}} \| O \|_{{\rm op}} \| U \|_{{\rm op}} + \| V^{\dagger} \|_{{\rm op}} \| O \|_{{\rm op}} \| U - V \|_{{\rm op}}\\
      &= \| U - V \|_{{\rm op}} (\| U \|_{{\rm op}} + \| V^{\dagger} \|_{{\rm op}} ) \| O \|_{{\rm op}}\\
      &= 2 \| U - V \|_{{\rm op}} \| O \|_{{\rm op}},
    \end{split}
  \end{equation}
  where the first inequality uses the triangle inequality, the second inequality uses the sub-multiplicative property of the operator norm, and the last equality uses the fact that the operator norm of the unitary matrix is one.
\end{proof}

\subsection{Covering number of quantum circuits}
For a DLA $ \mathfrak{g} $, we consider a ball $ \mathcal{B}_{\pi}( \mathfrak{g} ) \subset \mathbb{R}^{\dim( \mathfrak{g} )} $, where the radius of the ball is $ \pi $ and for $ g \in \mathfrak{g} $, $ \| g \|_{{\rm op}} \leq \pi $.
The Lie group $ U $ is generated by $ \exp( \mathfrak{g} ) $.
Then, for the covering number of $ \mathcal{B}_{\pi}( \mathfrak{g} ) $, the following inequality holds:
\begin{equation}\label{eq1}
  \mathcal{N}( \mathcal{B}_{\pi}(\mathfrak{g}), \| \cdot \|_{{\rm op}}, \varepsilon) \leq \left( 1 + \frac{2 \pi}{\varepsilon} \right)^{\dim( \mathfrak{g} )}.
\end{equation}

Using Lemmas \ref{lem1} and \ref{lem2} with $ \mathcal{H}_{1} = \mathcal{B}_{\pi}(\mathfrak{g}) $ and $ \mathcal{H}_{2} = U $, then $ K = 1 $, $ k = 2 - \exp(p) $, and the following inequality holds:
\begin{equation}\label{eq2}
  \mathcal{N}( U, \| \cdot \|_{{\rm op}}, \varepsilon) \leq \mathcal{N}( \mathcal{B}_{\pi}(\mathfrak{g}), \| \cdot \|_{{\rm op}}, \varepsilon) \leq \left( 1 + \frac{2 \pi}{\varepsilon} \right)^{\dim( \mathfrak{g} )},
\end{equation}
where the inequality on the left-hand side holds when $ \varepsilon \leq \frac{k r}{2} = \frac{(2 - \exp(p)) 2p }{2} = (2 - \exp(p))p $ because of $ r = 2p $.\footnote{
  From Lemma \ref{lem2}, when $ \| X \|_{{\rm op}} \leq p $ and $ \| Y \|_{{\rm op}} \leq p $, $ \| X - Y \|_{{\rm op}} \leq 2p $.
}
Thus, we obtain the following inequality:
\begin{equation}\label{eq3}
  \mathcal{N}( U, \| \cdot \|_{{\rm op}}, \varepsilon) \leq \left( 1 + \frac{2 \pi}{\varepsilon} \right)^{\dim( \mathfrak{g} )}.
\end{equation}

Next, we consider $ \mathcal{H} = \{ {\rm Tr}[ X \rho ] \} $ and $ \mathcal{H}_{circ} = \{ X \} $, where $ X $ denotes Hermitian matrices and $ \rho $ denotes density matrices.
Using Lemma \ref{lem3}, we obtain the following inequality for their covering numbers:
\begin{equation}\label{eq4}
  \mathcal{N}( \mathcal{H}, | \cdot |, \varepsilon) \leq \mathcal{N}( \mathcal{H}_{circ}, \| \cdot \|_{{\rm op}}, \varepsilon/\sqrt{N}),
\end{equation}
where $ N $ is the number of eigenvalues of $ X $.

As a quantum circuit, we define a parameterized unitary matrix $ U (\theta) $ as follows:
\begin{equation}\label{eq5}
  U(\theta) \coloneqq \prod_{l = 1}^{L} u(\theta_{l}) = \prod_{l = 1}^{L} \prod_{k = 1}^{K} \exp( g_{k} \theta_{l,k} ),
\end{equation}
where $ L $ denotes the circuit length, $ g_{k} \in \mathfrak{g} $, and $ \theta_{l,k} $ denotes trainable parameters.
Here, we divide the $ u $ into $ N_{t} $ trainable unitary matrices and $ N_{f} $ fixed unitary matrices.
In this study, the fixed unitary matrices are corresponding to entanglement circuits which provide the entanglement into the circuit.
Then, we rewrite $ U(\theta) $ as follows:
\begin{equation}\label{eq6}
  U(\theta) = \prod_{l = 1}^{L} {\rm I}[l \in {\rm L}_{{\rm fixed}} ] u_{l} + {\rm I}[l \in {\rm L}_{{\rm trainable}} ] u_{l}(\theta_{l}),
\end{equation}
where $ {\rm I} [\cdot] $ denotes the indicator function, $ {\rm L}_{{\rm trainable}} $ denotes a set of trainable gate number whose size is $ N_{t} $, and $ {\rm L}_{{\rm fixed}} $ denotes a set of fixed gate number whose size is $ N_{f} $.
Here, for the quantum circuits, we consider $ \mathcal{H}_{circ} = \{ U(\theta)^{\dagger} O U(\theta) \mid \theta \in \Theta \} $ and a covering set $ \mathcal{S}_{\varepsilon} $ for $ \mathcal{N}( U, \| \cdot \|_{{\rm op}}, \varepsilon) $, where $ \Theta $ is the parameter space.
In addition, we define the set
\begin{equation}\label{eq7}
  \tilde{ \mathcal{S} }_{\varepsilon} \coloneqq \left\{ \prod_{l = 1}^{L} {\rm I}[l \in {\rm L}_{{\rm fixed}} ] u_{l} + {\rm I}[l \in {\rm L}_{{\rm trainable}} ] u_{l}(\theta_{l}) \; \middle| \; u_{l}(\theta) \in \mathcal{S}_{\varepsilon} \right\}.
\end{equation}
Using Lemma \ref{lem4} for $ U(\theta) $ and $ U_{\varepsilon}(\theta) \in \tilde{ \mathcal{S} }_{\varepsilon} $, we obtain the following inequality
\begin{equation}\label{eq8}
  \left \| U(\theta)^{\dagger} O U(\theta) - U_{\varepsilon}(\theta)^{\dagger} O U_{\varepsilon}(\theta) \right \|_{{\rm op}} \leq 2 \| U(\theta) - U_{\varepsilon}(\theta) \|_{{\rm op}} \, \| O \|_{{\rm op}}.
\end{equation}
Here, considering $ \| U(\theta) - U_{\varepsilon}(\theta) \|_{{\rm op}} \leq (N_{t} - 1) \varepsilon $,\footnote{
  For example, for $ U_{\varepsilon} $, $ U^{'}_{\varepsilon} $, $ U $, and $ U^{'} $, since $ \| U_{\varepsilon} - U^{'}_{\varepsilon} \|_{{\rm op}} = 2 \varepsilon $, $ \| U - U_{\varepsilon} \|_{{\rm op}} \leq \varepsilon $, and $ \| U^{'} - U^{'}_{\varepsilon} \|_{{\rm op}} \leq \varepsilon $,
  $ \| U - U^{'}_{\varepsilon} \|_{{\rm op}} \leq 3 \varepsilon $.
  Thus, $ \| U - U^{'}_{\varepsilon} \|_{{\rm op}} \leq (N_{t} - 1) \varepsilon $, since $ N_{t} = 4 $.
} the following inequality is obtained:
\begin{equation}\label{eq9}
  \left \| U(\theta)^{\dagger} O U(\theta) - U_{\varepsilon}(\theta)^{\dagger} O U_{\varepsilon}(\theta) \right \|_{{\rm op}} \leq 2 (N_{t} - 1) \varepsilon \| O \|_{{\rm op}}.
\end{equation}
From Eq. (\ref{eq3}), $ | \mathcal{S}_{\varepsilon} | \leq \left( 1 + \frac{2 \pi}{\varepsilon} \right)^{\dim( \mathfrak{g} )} $.
There are $ | \mathcal{S}_{\varepsilon} |^{N_{t}} $ combinations for the gates in $ \tilde{\mathcal{S}}_{\varepsilon} $.
Thus, $ | \tilde{\mathcal{S}}_{\varepsilon} | \leq \left( 1 + \frac{2 \pi}{\varepsilon} \right)^{N_{t}\dim( \mathfrak{g} )} $.
Together with Eq. (\ref{eq9}), we obtain a covering number of $ \mathcal{H}_{circ} $ as follows:
\begin{equation}\label{eq10}
  \mathcal{N}(\mathcal{H}_{circ}, \| \cdot \|_{{\rm op}}, 2 (N_{t} - 1) \, \| O \|_{{\rm op}} \, \varepsilon) \leq \left( 1 + \frac{2 \pi}{\varepsilon} \right)^{N_{t}\dim( \mathfrak{g} )}.
\end{equation}
Thus, the following covering number is obtained:
\begin{equation}\label{eq11}
  \mathcal{N}(\mathcal{H}_{circ}, \| \cdot \|_{{\rm op}}, \varepsilon) \leq \left( 1 + \frac{4 \pi (N_{t} - 1) \, \| O \|_{{\rm op}} }{\varepsilon} \right)^{N_{t}\dim( \mathfrak{g} )}.
\end{equation}
Finally, using Eq. (\ref{eq4}), we obtain the covering number of $ \mathcal{H} = \left \{ {\rm Tr}[ U(\theta)^{\dagger} O U(\theta) \rho ] \, \mid \, \theta \in \Theta \right \} $ as follows:
\begin{equation}\label{eq12}
  \mathcal{N}(\mathcal{H}, | \cdot |, \varepsilon) \leq \left( 1 + \frac{4 \pi (N_{t} - 1) \sqrt{N} \, \| O \|_{{\rm op}} }{\varepsilon} \right)^{N_{t}\dim( \mathfrak{g} )}.
\end{equation}

\subsection{Main result: Generalization bound by the Lie-algebra perspective}
The generalization bound for the quantum circuits is given by the Rademacher complexity $ \mathcal{R}(\mathcal{H}) $ with probability $ 1 - \delta $ as follows \cite{mohri2018}:
\begin{equation}\label{eq13}
  \mathcal{L}( h(\theta_{*} ) ) - \hat{ \mathcal{L} }( h(\theta_{*} ) ) \leq 2 \mathcal{R}(\mathcal{H}) + 3 C \sqrt{ \frac{\ln \frac{2}{\delta}}{2 M} },
\end{equation}
where $ h(\cdot) \in \mathcal{H} $, $ \mathcal{L}( \cdot ) $ denotes the expected risk $ \mathbb{E}_{(x,y) \sim P_{D}} [ \mathtt{l}(h(\theta_{*}), y) ] $, where $ \mathtt{l} $ is a risk (cost) function and $ P_{D} $ is an unknown probability distribution, $ \hat{ \mathcal{L} }( \cdot ) $ denotes the empirical risk $ \frac{1}{M} \sum_{i=1}^{M} \mathtt{l}(h(\theta_{*}), y_{i}) $, $ \theta_{*} $ is the parameter optimized by a training algorithm $ \mathcal{A} $ with training data, $ C $ is a positive constant, and $ M $ is the number of training data.

\begin{theorem}\label{thm1}
  The following generalization gap holds.
  \begin{equation}
    \mathcal{L}( h(\theta_{*} ) ) - \hat{ \mathcal{L} }( h(\theta_{*} ) ) = \mathcal{O} \left( \sqrt{\dim( \mathfrak{g} )} \right).
  \end{equation}
\end{theorem}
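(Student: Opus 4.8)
The plan is to reduce the statement entirely to a bound on the Rademacher complexity $\mathcal{R}(\mathcal{H})$. The right-hand side of Eq. (\ref{eq13}) is the sum of $2\mathcal{R}(\mathcal{H})$ and the concentration term $3C\sqrt{\ln(2/\delta)/(2M)}$, and the latter does not depend on $\dim(\mathfrak{g})$ at all, hence is $\mathcal{O}(1)$ in the parameter of interest. Thus it suffices to show $\mathcal{R}(\mathcal{H}) = \mathcal{O}(\sqrt{\dim(\mathfrak{g})})$, where all quantities other than $\dim(\mathfrak{g})$ (namely $N_t$, $M$, $N$, $\| O \|_{{\rm op}}$, and $\pi$) are treated as fixed constants.

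First I would connect $\mathcal{R}(\mathcal{H})$ to the covering number of $\mathcal{H}$ furnished by Eq. (\ref{eq12}). The natural tool is Dudley's entropy integral, which bounds
\[
  \mathcal{R}(\mathcal{H}) \leq \inf_{\alpha > 0} \left( 4\alpha + \frac{12}{\sqrt{M}} \int_{\alpha}^{\infty} \sqrt{\ln \mathcal{N}(\mathcal{H}, |\cdot|, \varepsilon)}\, d\varepsilon \right).
\]
(Massart's finite-class lemma at a single well-chosen scale $\varepsilon$ would also work and yields the same scaling.) Taking logarithms in Eq. (\ref{eq12}) gives the metric entropy
\[
  \ln \mathcal{N}(\mathcal{H}, |\cdot|, \varepsilon) \leq N_t \dim(\mathfrak{g})\, \ln\!\left( 1 + \frac{4\pi (N_t - 1)\sqrt{N}\,\| O \|_{{\rm op}}}{\varepsilon} \right),
\]
so the square root of the entropy factors as $\sqrt{N_t \dim(\mathfrak{g})}$ times $\sqrt{\ln(1 + c/\varepsilon)}$, with $c = 4\pi (N_t - 1)\sqrt{N}\,\| O \|_{{\rm op}}$ a constant in $\dim(\mathfrak{g})$.

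The remaining step is to carry out the integral. Because $\sqrt{\ln(1 + c/\varepsilon)}$ is integrable as $\varepsilon \to 0^{+}$ and $\mathcal{H}$ is bounded (so the bound in Eq. (\ref{eq12}) tends to $1$, i.e. the integrand vanishes, once $\varepsilon$ exceeds the diameter), the entropy integral converges and may be truncated at a $\dim(\mathfrak{g})$-independent upper limit, contributing only a constant-and-logarithmic factor. Pulling the $\dim(\mathfrak{g})$-dependence out in front then yields
\[
  \mathcal{R}(\mathcal{H}) = \mathcal{O}\!\left( \frac{\sqrt{N_t \dim(\mathfrak{g})}}{\sqrt{M}} \right) = \mathcal{O}\!\left( \sqrt{\dim(\mathfrak{g})} \right),
\]
the last equality treating $N_t$ and $M$ as constants. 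Substituting back into Eq. (\ref{eq13}) and absorbing the $\dim(\mathfrak{g})$-independent concentration term gives the claim.

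I expect the main obstacle to be justifying the use of the covering number from Eq. (\ref{eq12}) inside the Rademacher/Dudley machinery. The covering number there is with respect to the scalar metric $|\cdot|$ obtained from Lemma \ref{lem3}, and that bound is uniform over density matrices $\rho$; hence it is really a sup-norm (uniform) cover of the function class $\{ x \mapsto {\rm Tr}[U(\theta)^{\dagger} O U(\theta) \rho(x)] \}$. One must verify that such a uniform cover dominates the empirical $L_2$ cover that Dudley's integral requires, and fix the truncation level $\alpha$ and the effective upper limit of integration so that the integral is finite. The secondary subtlety is bookkeeping: one has to state explicitly which of $N_t$, $M$, $N$, $\| O \|_{{\rm op}}$ are held fixed, so that the exponent $N_t \dim(\mathfrak{g})$ collapses to the advertised $\sqrt{\dim(\mathfrak{g})}$ scaling rather than $\sqrt{N_t \dim(\mathfrak{g})}$.
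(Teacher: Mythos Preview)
Your proposal is correct and follows essentially the same route as the paper: apply Dudley's entropy integral to the covering-number bound of Eq.~(\ref{eq12}), pull out the factor $\sqrt{N_t\dim(\mathfrak{g})}$, and observe that the remaining integral is a $\dim(\mathfrak{g})$-free constant before substituting back into Eq.~(\ref{eq13}). The paper simply makes the concrete choices $\alpha = 1/\sqrt{M}$, upper integration limit $1$, and the crude estimate $\sqrt{\ln(\cdot)} \le \ln(\cdot)$ to evaluate the integral in closed form, but the structure is identical to yours.
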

\begin{proof}
  Given a covering number $ \mathcal{H} $, $ \mathcal{R} $ is given by the Dudley entropy integral as follows:
  \begin{equation}\label{eq14}
    \mathcal{R}(\mathcal{H}) \leq \inf_{\alpha > 0} \left ( 4\alpha + \frac{12}{\sqrt{M}} \int_{\alpha}^{1} \sqrt{ \ln \mathcal{N}(\mathcal{H}, | \cdot |, \varepsilon) } \; d\varepsilon \right ).
  \end{equation}
  Using Eq. (\ref{eq12}), the integral part in Eq. (\ref{eq14}) is calculated as follows:
  \begin{equation}
    \begin{split}\label{eq15}
      &\int_{\alpha}^{1} \sqrt{ \ln \mathcal{N}(\mathcal{H}, | \cdot |, \varepsilon) } \; d\varepsilon\\
      &\leq \bigintsss_{\alpha}^{1} \sqrt{ \ln \left( 1 + \frac{4 \pi (N_{t} - 1) \sqrt{N} \, \| O \|_{{\rm op}} }{\varepsilon} \right)^{N_{t}\dim( \mathfrak{g} )} } \; d\varepsilon\\
      &= \sqrt{N_{t}\dim( \mathfrak{g} )} \bigintsss_{\alpha}^{1} \sqrt{ \ln \left( 1 + \frac{4 \pi (N_{t} - 1) \sqrt{N} \, \| O \|_{{\rm op}} }{\varepsilon} \right) } \; d\varepsilon\\
      &\leq \sqrt{N_{t}\dim( \mathfrak{g} )} \bigintsss_{\alpha}^{1} \ln \left( 1 + \frac{4 \pi (N_{t} - 1) \sqrt{N} \, \| O \|_{{\rm op}} }{\varepsilon} \right) \; d\varepsilon\\
      &= \sqrt{N_{t}\dim( \mathfrak{g} )} \left( \alpha \ln \alpha + (1 + D) \ln (1 + D) - (\alpha + D) \ln (\alpha + D) \right),
    \end{split}
  \end{equation}
  where $ D \coloneqq 4 \pi (N_{t} - 1) \sqrt{N} \, \| O \|_{{\rm op}} $.
  Using Eq. (\ref{eq13}) and $ \alpha = \frac{1}{\sqrt{M}} $, we obtain the following bound:
  \begin{equation}\label{eq16}
    \begin{split}
      &\mathcal{L}( h(\theta_{*} ) ) - \hat{ \mathcal{L} }( h(\theta_{*} ) )\\
      &\leq \frac{8}{\sqrt{M}} \left( 1 + 3 \sqrt{N_{t}\dim( \mathfrak{g} )} \left( \frac{1}{\sqrt{M}} \ln \frac{1}{\sqrt{M}} + (1 + D) \ln (1 + D) - \left(\frac{1}{\sqrt{M}} + D \right) \ln \left(\frac{1}{\sqrt{M}} + D \right) \right) \right)\\
      &+ 3 C \sqrt{ \frac{\ln \frac{2}{\delta}}{2 M} }.
    \end{split}
  \end{equation}
  Thus, we obtain the following result:
  \begin{equation}\label{eq17}
    \mathcal{L}( h(\theta_{*} ) ) - \hat{ \mathcal{L} }( h(\theta_{*} ) ) = \mathcal{O} \left( \sqrt{\dim( \mathfrak{g} )} \right).
  \end{equation}
\end{proof}
We note that the generalization gap depends on training algorithms we use.

Regarding the number of trainable parameters, we obtain the following corollary.
\begin{corollary}\label{coro1}
  When $ \exp{ (p) } < 2 $, then the number of trainable parameters $ N_{t} $ satisfies the following inequality:
  \begin{equation}\label{eq18}
    N_{t} \leq \frac{2}{(2 - \exp{(p)}) p} + 1.
  \end{equation}
\end{corollary}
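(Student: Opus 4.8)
The plan is to combine two facts already established in the covering-number construction of the preceding subsection. The first is a purely algebraic observation: both $U(\theta)$ and its covering approximant $U_{\varepsilon}(\theta) \in \tilde{\mathcal{S}}_{\varepsilon}$ are products of unitary matrices, hence themselves unitary, so by the triangle inequality together with $\| U \|_{{\rm op}} = 1$ for any unitary $U$ we always have $\| U(\theta) - U_{\varepsilon}(\theta) \|_{{\rm op}} \leq \| U(\theta) \|_{{\rm op}} + \| U_{\varepsilon}(\theta) \|_{{\rm op}} = 2$. The second is the accumulated-error estimate $\| U(\theta) - U_{\varepsilon}(\theta) \|_{{\rm op}} \leq (N_{t} - 1)\varepsilon$ invoked just before Eq. (\ref{eq9}).

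The key step is to require the accumulated bound to be consistent with this universal bound of $2$, i.e. $(N_{t}-1)\varepsilon \leq 2$, and then to substitute the largest admissible value of $\varepsilon$. Recall from the discussion around Eq. (\ref{eq2}) that the left-hand covering inequality furnished by Lemma \ref{lem1} is valid only for $\varepsilon \leq kr/2 = (2 - \exp(p))p$, where the factor $k = 2 - \exp(p)$ comes from Lemma \ref{lem2} and $r = 2p$. This is exactly where the hypothesis $\exp(p) < 2$ is used: it guarantees $k = 2 - \exp(p) > 0$, so that the admissible interval for $\varepsilon$ is nonempty and the quantity $(2 - \exp(p))p$ is strictly positive.

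First I would set $\varepsilon$ equal to its maximal admissible value $(2 - \exp(p))p$. Feeding this into $(N_{t}-1)\varepsilon \leq 2$ gives $(N_{t} - 1)(2 - \exp(p))p \leq 2$, and solving for $N_{t}$ yields $N_{t} \leq \frac{2}{(2 - \exp(p))p} + 1$, which is precisely Eq. (\ref{eq18}). No further computation is required.

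The main obstacle is conceptual rather than computational: one must justify why $(N_{t}-1)\varepsilon \leq 2$ is the correct constraint to impose, and why evaluating it at the boundary of the admissible $\varepsilon$-range is legitimate. The intuition is that once the covering error accumulated across the $N_{t}$ trainable gates exceeds the maximal operator-norm separation $2$ between any two unitaries, the net $\tilde{\mathcal{S}}_{\varepsilon}$ ceases to function as a genuine cover of the unitary set at the working scale; enforcing $(N_{t}-1)\varepsilon \leq 2$ keeps the construction self-consistent, and pushing $\varepsilon$ to the largest value for which the Lemma \ref{lem1} estimate remains valid converts this consistency requirement into the tightest cap on $N_{t}$.
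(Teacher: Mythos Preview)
Your proposal is correct and follows essentially the same route as the paper: invoke the universal bound $\|U(\theta)-U_{\varepsilon}(\theta)\|_{{\rm op}}\le 2$, combine it with the accumulated estimate $(N_t-1)\varepsilon$, set $\varepsilon$ to its maximal admissible value $(2-\exp(p))p$ from Lemmas~\ref{lem1}--\ref{lem2}, and rearrange. If anything, your paragraph on the ``main obstacle'' articulates more explicitly than the paper does why the consistency constraint $(N_t-1)\varepsilon\le 2$ is being imposed.
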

\begin{proof}
  From Lemmas \ref{lem1} and \ref{lem2}, $ 0 < \varepsilon \leq (2 - \exp{(p)}) p $.
  Thus, $ 0 < 2 - \exp{(p)} $ and $ \| U(\theta) - U_{\varepsilon}(\theta) \|_{{\rm op}} \leq (N_{t} - 1) \varepsilon \leq (N_{t} - 1) (2 - \exp{(p)}) p $.
  Since $ \| U(\theta) - U_{\varepsilon}(\theta) \|_{{\rm op}} \leq 2 $, the following inequality must be satisfied.
  \begin{equation}\label{eq19}
    (N_{t} - 1) (2 - \exp{(p)}) p \leq 2.
  \end{equation}
  To complete the proof, since $ p > 0 $ and $ \exp{(p)} < 2 $, we rearrange the above equation.
\end{proof}
According to Corollary \ref{coro1}, for example, for $ p = 0.1 $, $ N_{t} < 23.4 $ and $ \varepsilon \leq 0.045 $.
Given $ p $ or $ \varepsilon $ value, $ N_{t} $ is upper-bounded as presented in Corollary \ref{coro1}.\footnote{
  For the case of $ \varepsilon $,
  \begin{equation}\label{eq20}
    N_{t} \leq \frac{2}{\varepsilon} + 1.
  \end{equation}
}
%This corollary determines the number of parameters in the model trained under holding Theorem \ref{thm1}.

Figure \ref{fig-1} shows the relationship between $ N_{t} $ and $ p $ under $ p \in [0.1, 0.69] $.
\begin{figure}[htbp]
  \centering
  \begin{tabular}{c}
    \includegraphics[width=8cm]{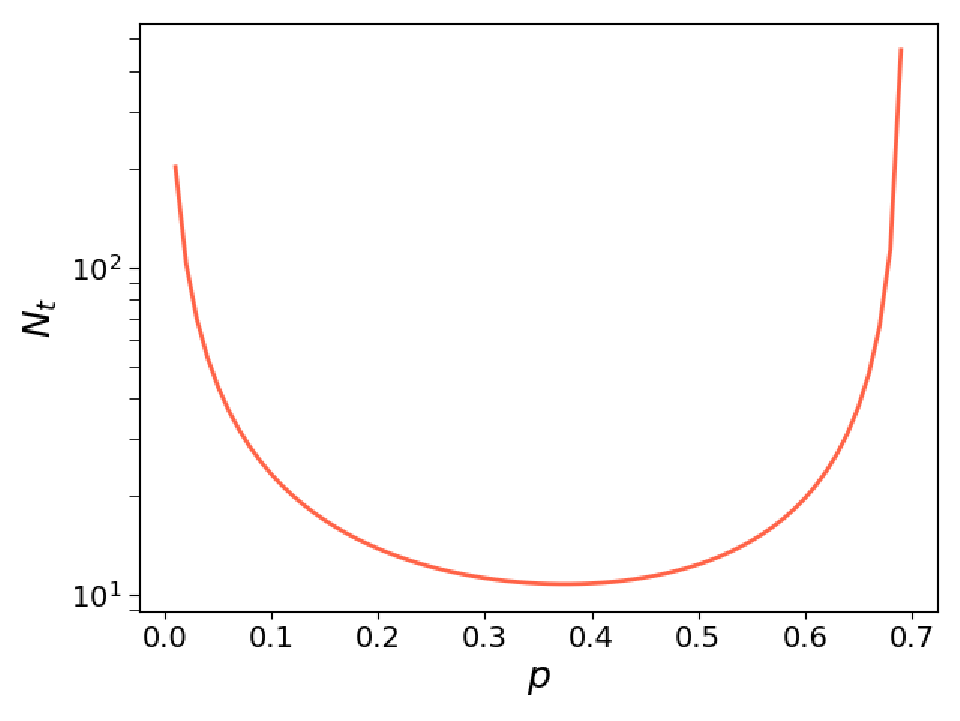}
  \end{tabular}
  \caption{Relationship between $ N_{t} $ and $ p $. $ p \in [0.1, 0.69] $ ($ \exp{(p)} < 2 $).}\label{fig-1}
\end{figure}
At $ p = 0.374 \cdots $, the minimum value of $ N_{t} $ is $ 10.785 \cdots $ regardless of the number of data $ M $.
However, the condition $ \exp{(p)} < 2 $ may be too severe in practice.

In the training algorithm $ \mathcal{A} $, the $ p $ value corresponds to the largest singular value of $ \theta \mathfrak{g} $ ($ \sigma_{max}(\theta \mathfrak{g}) $), where $ \theta $ is the trainable parameter.
Here, we determine $ \theta_{max} $ such that the largest singular value of $ \theta \mathfrak{g} $ ($ \theta \in \mathbb{R} $) is equal to $ \ln(2) $.

\section{Experiments}
Referring to \cite{yuxuan2022}, we constructed synthetic datasets to verify Theorem \ref{thm1}.\\
\noindent
{\bf Data construction.}
The synthetic dataset $ S = \{ x_{i}, y_{i} \} $ was constructed as follows:
\begin{equation}
  x_{i} \in [0, 2\pi]^{n},
\end{equation}
\begin{equation}
  y_{i} = \braket{ 0^{n} |\, U_{E}(x_{i})^{\dagger} V^{\dagger} O V U_{E}(x_{i}) \,| 0^{n} },
\end{equation}
where $ U_{E} $ is defined as follows:
\begin{equation}
  U_{E}(x_{i}) \coloneqq U_{enta} \left( \bigotimes^{n}_{j = 1} R_{Y}(x_{i}^{(j)}) \right) U_{enta} \left( \bigotimes^{n}_{j=1} R_{Y}(x_{i}^{(j)}) \right),
\end{equation}
where $ U_{enta} $ is an entanglement part composed of CNOT gates.
$ O $ is the measurement operator defined by $ O \coloneqq Z(0) $, and $ V $ (target unitary) is defined by
\begin{equation}
  V \coloneqq \prod_{l=1}^{2} \left( R_{Z}(\beta^{l}) \, R_{Y}(\gamma^{l}) \, R_{Z}(\nu^{l}) \right)^{\otimes n},
\end{equation}
where $ \beta $, $ \gamma $, and $ \nu $ are sampled uniformly from the range $ [0, 2\pi) $.
For the verification, we prepared the training dataset with size 10 ($ M = 10 $) and the test dataset with size 100.
The number of each dataset was 20.

\noindent
{\bf Model (assumption) construction.}
In this study, we adopted the transverse field Ising model (TFIM), because TFIM has different DLA dimensions according to different boundary conditions (open and closed).
The Hamiltonian of TFIM is defined as follows \cite{larocca2022}:
\begin{equation}
  H = \sum_{i = 1}^{n_{f}} Z_{i} Z_{i+1} + \sum_{i = 1}^{n} X_{i},
\end{equation}
where $ n_{f} = n - 1 $ for the open boundary condition and $ n_{f} = n $ for the closed boundary condition.
The generators ($ \mathfrak{g} $) of TFIM is $ \left\{ \sum_{i = 1}^{n_{f}} Z_{i} Z_{i+1}, \sum_{i = 1}^{n} X_{i} \right\} $.
Each DLA dimension is $ \dim(\mathfrak{g}) = n^{2} $ for the open and $ \dim(\mathfrak{g}) = n $ for the closed boundary conditions \cite{larocca2022}.

Corresponding to the target unitary $ V $, the assumption model $ U $, a parameterized unitary, is defined as follows:
\begin{equation}\label{eq21}
  U(\theta) = \prod_{l = 1}^{L} \prod_{k = 1}^{K} \exp(i \theta_{l,k} \mathfrak{g}),
\end{equation}
where $ \mathfrak{g} $ is a linear combination with all coefficients equal to one, which is composed of the generators of TFIM.
We note that $ y = \braket{ 0^{n} |\, U_{E}(x)^{\dagger} U(\theta)^{\dagger} O U(\theta) U_{E}(x) \,| 0^{n} } $.
In the experiments, $ L = 2 $ and $ K = 10 $, thus the number of parameters was 20.

To investigate the dependency of training algorithm $ \mathcal{A} $, we employed two distinct approaches: the parameter shift rule \cite{schuld2019} and random search.
For the parameter shift rule, with the simultaneous perturbation stochastic approximation (SPS) \cite{gacon2021}, the initial parameter values were drawn from a uniform distribution within the range $ [-0.01, 0.01) $.
In contrast, for the random search algorithm (RAN), the initial parameter values were set to zero.
The total number of epochs was 200.

\section{Results}
In this section, we present the numerical simulation results to confirm our main findings as follows:
\begin{enumerate}
\item Training results of the model using SPS and RAN to examine the differences between the training algorithms.
\item Generalization gap with varying numbers of qubits to verify the validity of Theorem \ref{thm1}.
\item Upper bound on the number of trainable parameters to verify the validity of Corollary \ref{coro1}.
\end{enumerate}

\subsection{Training results}
At first, the training results (the root mean squared errors (RMSEs)) for the training algorithms under the open and closed boundary conditions of the TFIM are shown in Figures \ref{fig-eee} and \ref{fig-fff}.
\begin{figure}[htbp]
  \centering
  \begin{tabular}{c}
    \includegraphics[width=8cm]{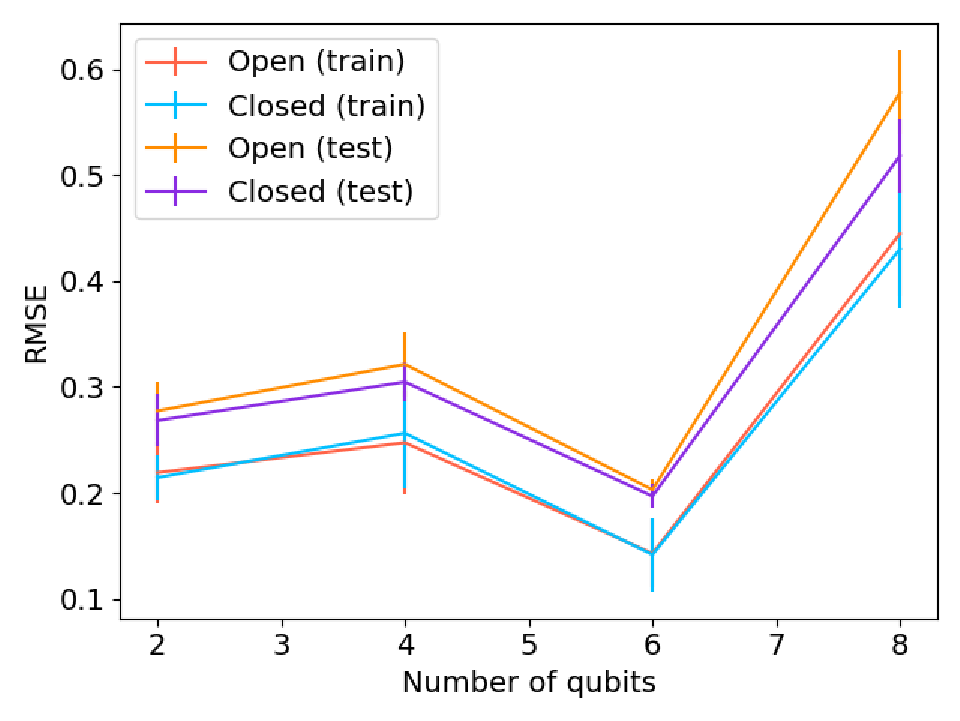}
  \end{tabular}
  \caption{RMSE results of SPS for training and test data}\label{fig-eee}
\end{figure}
\begin{figure}[htbp]
  \centering
  \begin{tabular}{c}
    \includegraphics[width=8cm]{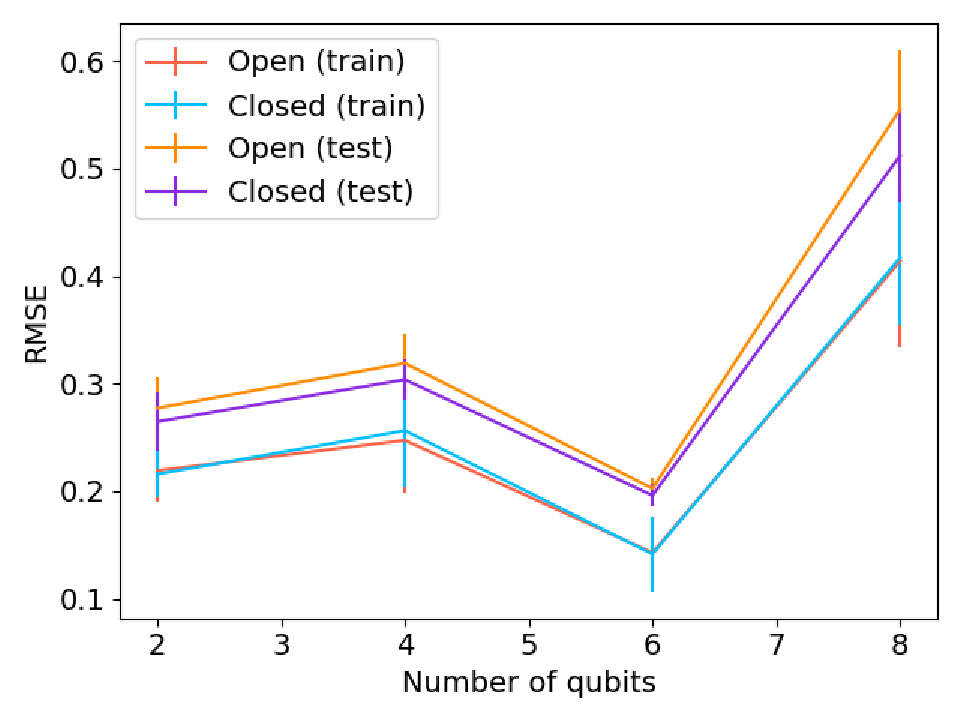}
  \end{tabular}
  \caption{RMSE results of RAN for training and test data}\label{fig-fff}
\end{figure}
In the figures, for example, the term ``Open (train)'' denotes the results of the open boundary condition for the training data.
Error bars indicate one standard deviation.
As the number of qubits increases, the RMSEs for both the training and test data increase.
No significant differences were almost observed due to the boundary conditions or the training algorithms\footnote{
  Among the 32 combinations for the two-sided Student's t-test, the P-value for the case of Open (test) vs. Closed (test) for SPS was 0.003 ($ < 1\% $).
  For all other cases, the P-values were greater than $ 5\% $.
}.
Therefore, we conclude that after training, the circuits obtained from SPS and RAN were almost equivalent with respect to the training.
However, for both training algorithms, the RMSEs for eight qubits were somewhat large.
This may be because the approximation ability of the circuit $ U $ (Eq. \ref{eq21}) is low.
Regarding the model type of $ U $, the realized non-linearity in the input-output relationship depends on both the input-data encoding circuit and the measurement.
Hence, such non-linearity may be sufficiently large for the training task involving eight qubits.

\subsection{Generalization gap}
Figures \ref{fig-yyy} and \ref{fig-zzz} show the results of averaged generalization gaps for the open and closed boundary conditions using SPS and RAN exploited as $ \mathcal{A} $, respectively.
\begin{figure}[htbp]
  \centering
  \begin{tabular}{c}
    \includegraphics[width=8cm]{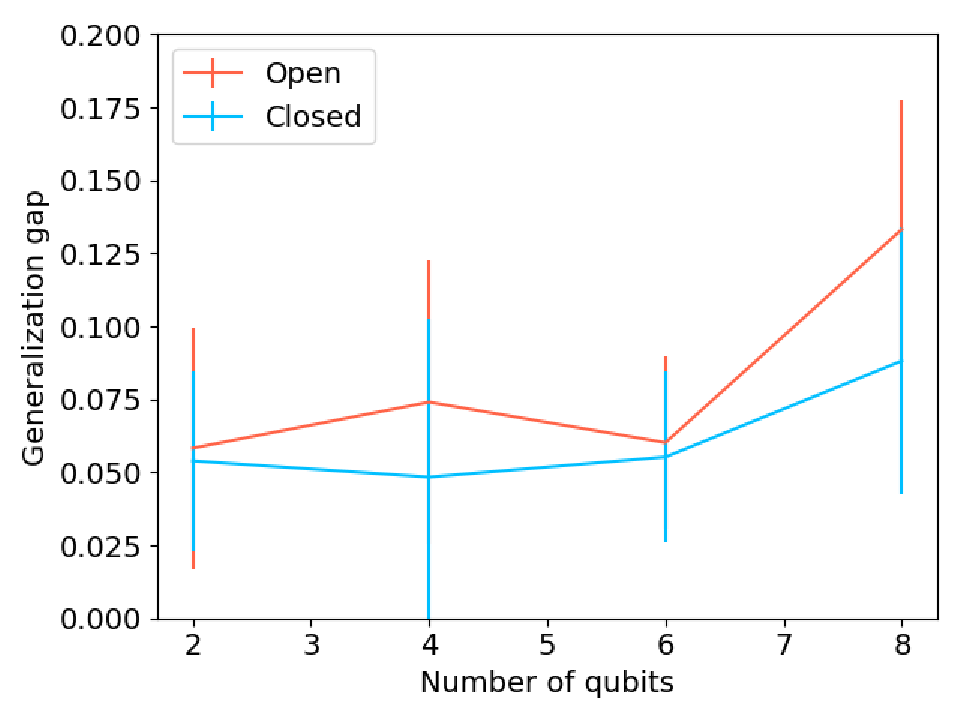}
  \end{tabular}
  \caption{Experimental results of generalization gaps. Training algorithm is SPS.}\label{fig-yyy}
\end{figure}
\begin{figure}[htbp]
  \centering
  \begin{tabular}{c}
    \includegraphics[width=8cm]{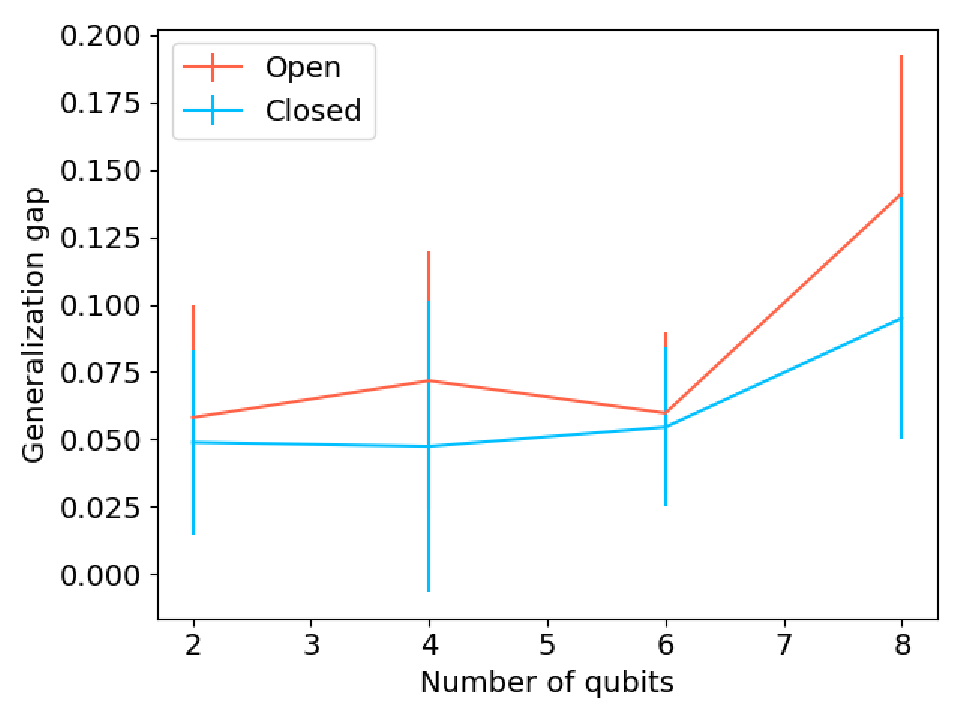}
  \end{tabular}
  \caption{Experimental results of generalization gaps. Training algorithm is RAN.}\label{fig-zzz}
\end{figure}
In the training experiments, we observed some instances where empirical risks did not decrease sufficiently due to random seed dependency.
In these cases, there are cases where generalization gaps are negative.
Therefore, we included only positive generalization gaps in the results.

From Figures \ref{fig-yyy} and \ref{fig-zzz}, both training algorithms exhibited similar tendencies in the results.
We also observed that increasing the number of qubits leads to larger generalization gaps, regardless of the training algorithm.
The variances in the generalization gap were relatively large.
Additionally, the generalization gaps for the open boundary condition were greater than those for the closed boundary condition.
It appears that the difference between the generalization gaps of the open and closed boundary conditions becomes more pronounced as the number of qubits increases.
This is suggested by Theorem \ref{thm1}, because for the open boundary condition, the generalization gap is scaled by $ \mathcal{O}(n) $, while for the closed boundary condition, that is $ \mathcal{O}(\sqrt{n}) $.
Regarding the linear approximation, the slopes of each results are shown in Table \ref{tab-aaa}.
\begin{table}[htb]
  \centering
  \caption{Results of slopes of generalization gap}\label{tab-aaa}
  \vspace{5pt}
  \begin{tabular}{ccc}\hline
    Algorithm & Boundary condition & Slope\\ \hline
    SPS & Open & 0.011\\
    SPS & Closed & 0.005\\
    RAN & Open & 0.012\\
    RAN & Closed & 0.007\\ \hline
  \end{tabular}
\end{table}
However, for the fitting curves for the data points using the linear approximation formula, $ {\rm R}^{2} $ values were approximately 0.6.
Therefore, the fitting curves did not sufficiently align with those predicted by Theorem \ref{thm1}.
Further investigation into these discrepancies is needed.

\subsection{Upper bound of the number of the trainable parameters}
Next, we evaluated whether the inequality Eq. \ref{eq18} in Corollary \ref{coro1} holds for the number of trainable parameters ($ N_{t} $).
The evaluation index (CR) was defined as follows:
\begin{equation}
  {\rm CR} \coloneqq \frac{\sum_{k=1}^{N_{t}} {\rm I}(N_{t} < N_{t}^{*})}{N_{t}},
\end{equation}
where $ N_{t}^{*} = \frac{2}{(2 - \exp(p))p} + 1 $ and $ p = \sigma_{max}(\theta_{l,k}^{*} \mathfrak{g}) $ ($ \theta_{k}^{*} $ denotes the $ l $-the layer and $k$-th optimized parameter values).
It should be noted that in this study, the operator norm of $ \theta_{k}^{*} \mathfrak{g} $ was used to estimate the upper bound of the operator norm of the generators after training.
However, instead of this generator setting, if an Hermitian matrix of $ \prod_{l = 1}^{L} \prod_{k = 1}^{K} \exp(i \theta_{l,k} \mathfrak{g}) $ is derived, the Hermitian matrix may be used for a more accurate estimation of the upper bound.

As the second index, the maximum values of $ p $ is defined as follows:
\begin{equation}
  p_{max} \coloneqq \max_{1 \leq l \leq L, 1 \leq k \leq K} \sigma_{max}(\theta_{l,k}^{*} \mathfrak{g})
\end{equation}
And the corresponding estimated $ N_{max} $ is defined as $ N_{max} \coloneqq \frac{2}{(2 - \exp(p_{max}))p_{max}} + 1 $. 

Figure \ref{fig-aaa} shows the results of CR for the training algorithms (SPS and RAN) with the open and closed boundary conditions.
\begin{figure}[htbp]
  \centering
  \begin{tabular}{c}
    \includegraphics[width=8cm]{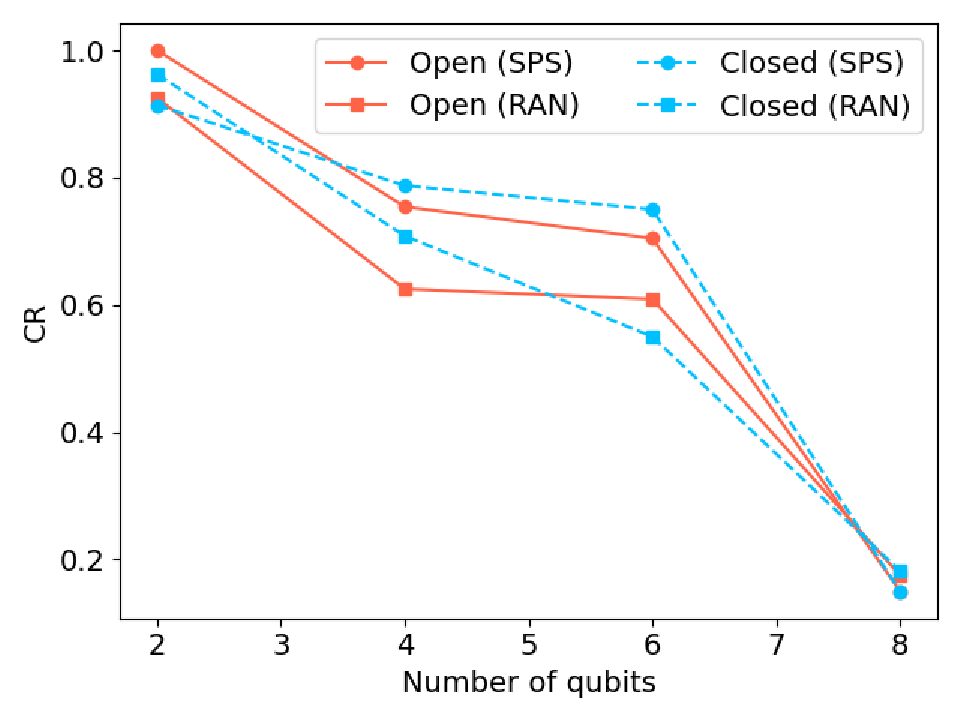}
  \end{tabular}
  \caption{CR results for open and closed boundary conditions. Training algorithms are a parameter shift (SPS) and a random search (RAN) algorithms. the solid lines denote the open boundary condition. The dashed lines denote the closed boundary condition.}\label{fig-aaa}
\end{figure}
In the figure, the data points represent averaged values.
Note that $ N_{t} = 20 $.
From the figure, we observed that increasing the number of qubits results in lower CR values.
%The case of ``Closed'' and ``SPS'' showed slightly higher CR values compared to the other cases.
For this observation, it remains unclear whether the underlying reasons are empirical or theoretical.

Figure \ref{fig-ggg} shows the results of $ p_{max} $ and $ N_{max} $ for SPS and RAN under both open and closed boundary conditions.
As the number of qubits increases, $ p_{max} $ increases while $ N_{max} $ decreases, regardless of the training algorithm or boundary condition.
We did not find clear differences between the results for open and closed boundary conditions.
However, there were differences between the results for SPS and RAN.
For the $ N_{max} $ result of SPS for eight qubits, the variance was large.
This is because the $ p_{max} $ values were in the vicinity of $ \ln(2) $ (Figure \ref{fig-ggg} (a)).
\begin{figure}[htbp]
  \centering
  \begin{minipage}{14.5cm}
    \SetFigLayout{2}{2}
    \subfigure[$p_{max}$ (SPS)]{\includegraphics[width=7cm]{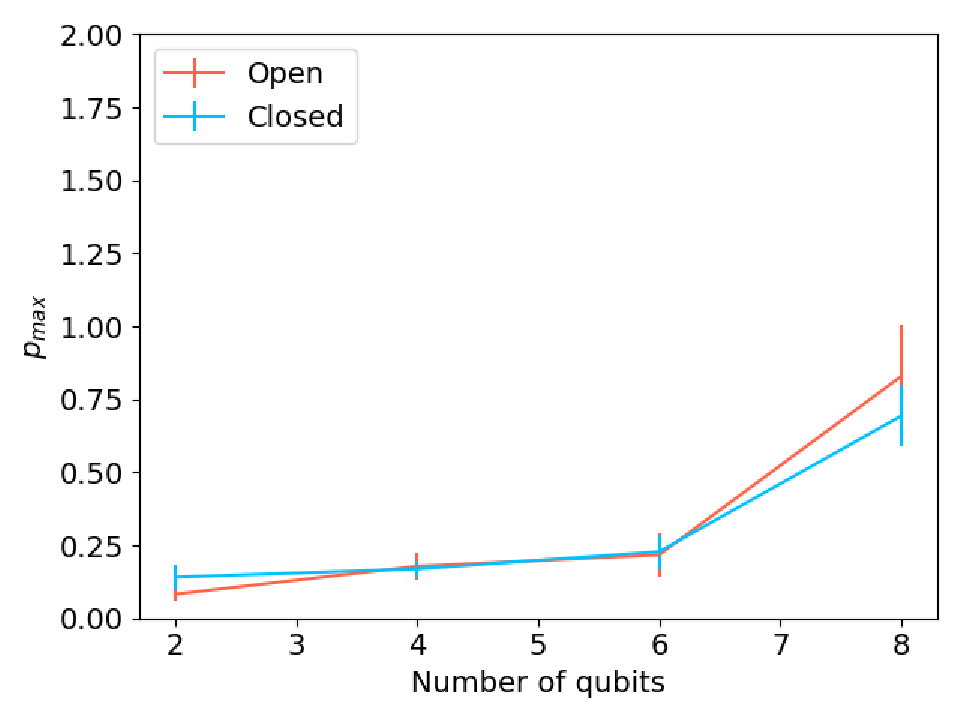}}
    \hfill
    \subfigure[$N_{max}$ (SPS)]{\includegraphics[width=7cm]{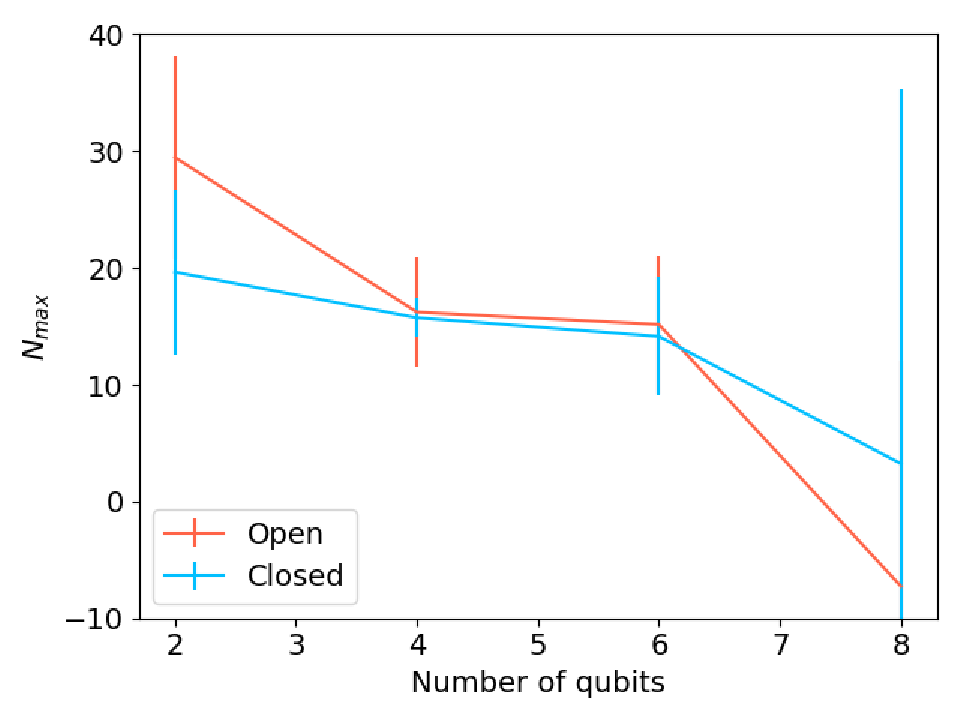}}\\
    \subfigure[$p_{max}$ (RAN)]{\includegraphics[width=7cm]{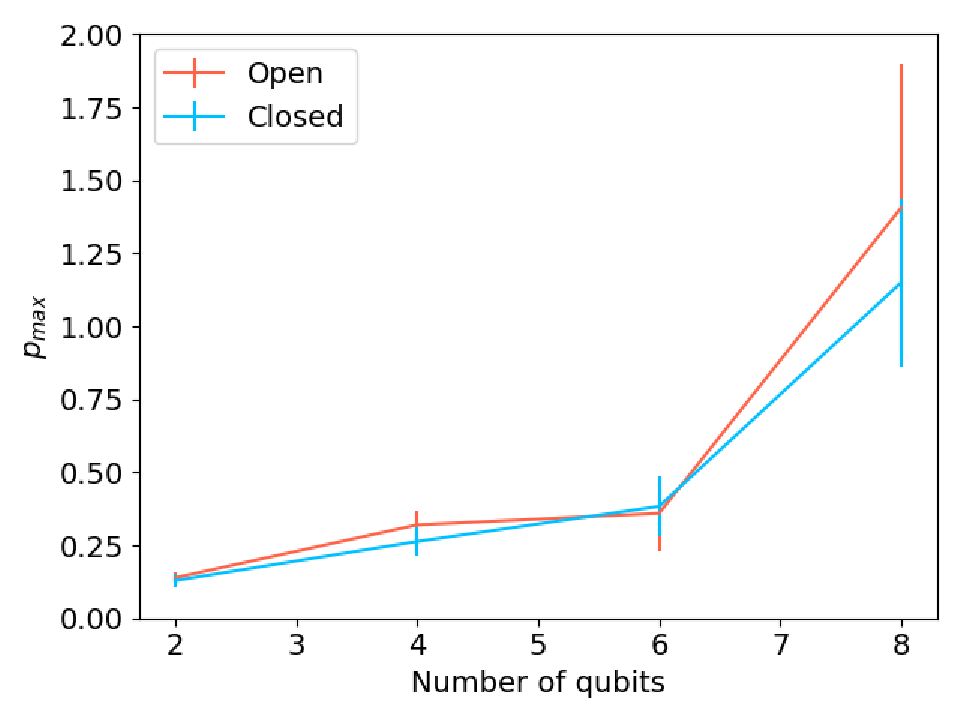}}
    \hfill
    \subfigure[$N_{max}$ (RAN)]{\includegraphics[width=7cm]{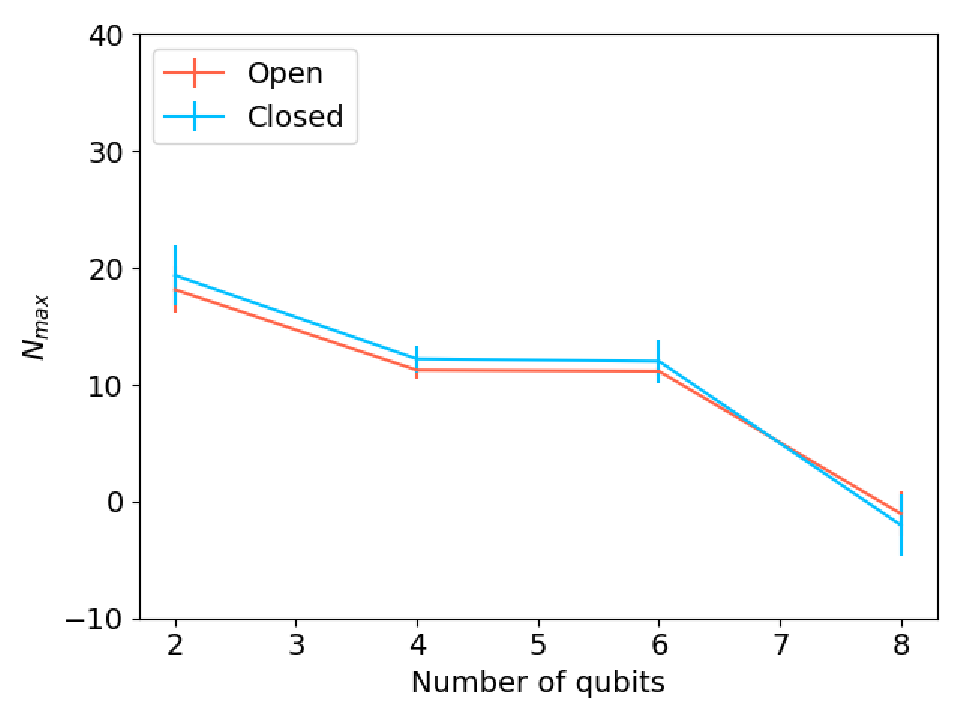}}
  \end{minipage}
  \caption{Results of $ p_{max} $ and $ N_{max} $ for SPS and RAN}\label{fig-ggg}
\end{figure}

In summary of the results related to Corollary \ref{coro1}, the index values suggested that satisfying the condition $ \exp(p) < 2 $ becomes difficult as the number of qubits increases.

\section{Discussion}
In this section, we explore desirable parameterized unitaries (models) in quantum machine learning from the perspective of DLA, addressing BPs, overparameterization, and generalization capability.

According to \cite{ragone2024}, if $ \dim (\mathfrak{g}) \in \mathcal{O}({\rm poly}(n)) $, the circuits do not exhibit BPs when the $ \mathfrak{g} $-purities (a projection map of a quantum state onto $ {\rm span}_{\mathbb{C}} ( \mathfrak{g} ) $, where $ \mathfrak{g} $ is a Lie algebra) for initial quantum state ($ \rho $) and measurement operator ($ O $) are $ \notin \mathcal{O}(1/b^{n}) $ with $ b > 2 $.
Larocca et al \cite{larocca2023} showed that when $ \dim (\mathfrak{g}) \in \mathcal{O}({\rm poly}(n)) $, quantum neural networks such as $ U $ (Eq. \ref{eq5}) exhibit overparameterization.
In this study, for $ U $, the generalization capability with respect to increasing the number of qubits is desirable when $ \dim(\mathfrak{g}) \in \mathcal{O}( n ) $.
Therefore, the condition that satisfies all three properties, avoiding BPs and ensuring overparameterization and better generalization, is $ \dim(\mathfrak{g}) \in \mathcal{O}(n) $.
Hence, a sufficient condition for desirable unitaries is that the skew-Hermitian matrices generating these unitaries have a DLA dimension of $ \mathcal{O}(n) $.
Such a unitary $ U $ for the $ n $-qubit system, which may serve as a simpler example, can be defined as follows:
\begin{equation}
  U \coloneqq R_{y} (\theta_{1}) \otimes R_{y} (\theta_{2}) \cdots \otimes R_{y} (\theta_{n}).
\end{equation}
We note that $ R_{y} (\theta) = \exp( -i \frac{\theta}{2} \sigma_{y} ) $, where $ \sigma_{y} $ is the Pauli Y matrix (Hermitian matrix).
Thus, when $ \theta_{k} $ are independent parameters, the DLA dimension becomes $ n $.

\section{Conclusion}
This study presented a generalization bound (Theorem \ref{thm1}) for quantum neural networks using covering numbers derived from DLA.
According to Theorem \ref{thm1}, we recommend using DLA with $ \dim(\mathfrak{g}) = n $ for better generalization capability.
Additionally, an upper bound (Corollary \ref{coro1}) for the trainable parameters is derived using the maximum operator norm of the generators.
Numerical simulation results for confirming the proposed theorem weakly supported the validation of the upper-bound order of the generalization gap.
Furthermore, we observed that for our generator setting, the upper bound for the trainable parameters does not hold as the number of qubits increases.
Regarding model training, the RMSE results exhibited relatively large values, which may affect the generalization gap.
We also observed that the variances in the generalization gap were large.
Therefore, it is inadvisable to perform statistical tests between the open and closed boundary conditions.
Further investigation is needed to address these issues.
Additionally, satisfying the upper bound for the trainable parameters may be difficult in practical.

Future work is needed to gain a further understanding of these results.
Additionally, we will explore whether the proposed theorem extends to other quantum neural networks, such as those with data re-uploading circuits \cite{perezsalinas2020}.

\bibliographystyle{plain}
\bibliography{references}

\end{document}